\begin{document}

\title{Logarithmic Depth Decomposition of Approximate Multi-Controlled Single-Qubit Gates Without Ancilla Qubits}

\author{
        Jefferson D. S. Silva\orcidlink{0000-0002-2063-6140},
        Adenilton J. da Silva\orcidlink{0000-0003-0019-7694}
        \thanks{
        This work is supported by research grants from CNPq, CAPES, and FACEPE (Brazilian research agencies).
        
        J.D.S. Silva and A.J. da Silva are with Centro de Informática, Universidade Federal de Pernambuco, Recife, Pernambuco, 50.740-560, Brazil (e-mail: jdss2@cin.ufpe.br; ajsilva@cin.ufpe.br).}}
\maketitle

\begin{abstract}
The synthesis of quantum operators involves decomposing general quantum gates into the gate set supported by a given quantum device. Multi-controlled gates are essential components in this process. In this work, we present an improved decomposition of multi-controlled NOT gates with logarithmic depth using a single ancilla qubit while reducing the ancillary resource requirements compared to previous work. We further introduce a relative-phase multi-controlled NOT gate that eliminates the need for ancillas. Building on these results, we optimize a previously proposed decomposition of multi-target, multi-controlled special unitary SU(2) gates by identifying the presence of a conditionally clean qubit. 
Additionally, we introduce the best-known decomposition of multi-controlled approximate unitary U(2) gates, which do not require ancilla qubits. This approach significantly reduces the overall circuit depth and CNOT count while preserving an adjustable error parameter, yielding a more efficient and scalable solution for synthesizing large controlled-unitary gates. Our method is particularly suitable for both NISQ and fault-tolerant quantum architectures. All software developed in this project is freely available.
\end{abstract}

\begin{IEEEkeywords}
Quantum circuits, multi-controlled gates, conditionally clean qubits.
\end{IEEEkeywords}

\section{Introduction}

Quantum computing has emerged as a transformative technology capable of revolutionizing cryptography, optimization, and material science~\cite{shor1999polynomial, zhou2020quantum, cao2019quantum, bauer2020quantum, preskill2018quantum}. However, implementing quantum algorithms on near-term quantum devices, known as Noisy Intermediate-Scale Quantum (NISQ) devices~\cite{preskill2018quantum}, poses significant challenges due to limited qubit coherence times and high error rates. One of the critical tasks in quantum circuit design is the efficient decomposition of complex quantum gates into elementary gates, such as single-qubit gates and controlled-NOT (CNOT) gates, to minimize circuit depth and gate count. 

Multi-controlled gates play a central role in quantum circuit design and have motivated a wide range of studies on their decomposition and optimization~\cite{barenco_1995, shende2006synthesis, maslov2016advantages, iten2016quantum, he2017decompositions, adenilton2022linear, vale2023circuit, claudon2024polylogarithmic, silva2024linear, nie2024quantum, khattar2025rise}.
A prominent example is the $n$-qubit Toffoli gate~\cite{barenco_1995}, which applies an $X$ gate to a target qubit conditioned on the state of $n$ control qubits. The efficient implementation of such gates is crucial for a wide range of applications, including quantum error correction~\cite{inada2021measurement}, quantum arithmetic~\cite{remaud2025ancilla}, and quantum simulation~\cite{morgado2021quantum}. 

Recent advancements in quantum circuit optimization have introduced the concept of \textit{conditionally clean ancillae}, which are qubits that are in a known state conditioned on the state of other qubits~\cite{nie2024quantum}. This concept has been leveraged to reduce the gate count and depth of quantum circuits and the total number of ancillary qubits required in the circuit design, particularly in the decomposition of $n$-qubit Toffoli gates with $O(log(n))$ depth.

Ref.~\cite{khattar2025rise} achieves a decomposition of the $n$-qubit Toffoli gate with depth $\mathcal{O}(\log n)$. This decomposition has a CNOT count of $6n-6$ using two clean ancillae or $12n-18$ using two dirty ancilla qubits. 

This approach represents an improvement over previous methods, which either required a larger number of ancillary qubits or resulted in higher gate counts.

In parallel, research on approximate decompositions of multi-controlled single-qubit gates has shown promise in reducing the complexity of quantum circuits. In Ref.~\cite{silva2024linear}, the authors proposed a method for approximating $n$-controlled single-qubit gates with size and depth $\mathcal{O}(n)$ without ancilla qubits, and with size $\mathcal{O}(n)$ and depth $\mathcal{O}(\log{n})$ using $n - 3$ ancilla qubits, significantly reducing the CNOT gate count while maintaining a controlled approximation error. In Ref.~\cite{claudon2024polylogarithmic}, the authors introduced an approximate $n$-controlled single-qubit gate decomposition with an adjustable error parameter $\epsilon$, achieving a depth of $\mathcal{O}(\log{n^3} \cdot \log{(1/\epsilon)})$ and size $\mathcal{O}(n \log{n^4})$ without ancilla qubits. These approaches are particularly relevant for NISQ devices, where the trade-off between circuit complexity and approximation error can be carefully managed to enhance the overall performance of quantum algorithms.

In this work, we introduce a relative-phase $n$-qubit Toffoli gate that eliminates the need for ancillas. Additionally, we present an improved Toffoli gate decomposition based on~\cite{khattar2025rise}, reducing the ancillary count to one and allowing a recursive application of the decomposition. Building on these results, we develop $n$-controlled, $m$-target $X$ and $\mathrm{SU(2)}$ operators with $\mathcal{O}(\log{n} + \log{m})$ depth, enhancing the approximate decomposition of multi-controlled $\mathrm{U(2)}$ operators proposed by Silva et al.~\cite{silva2024linear}. Our construction achieves ancilla-free operation in the final circuit with $\mathcal{O}(\log n)$ depth and a lower CNOT cost.
 
The remainder of this paper is organized as follows. Section II reviews related work on multi-controlled gate decompositions and conditionally clean ancillae. Section III presents our ancilla-free $n$-qubit Toffoli gate with relative phase and logarithmic depth. Section IV develops the single-ancilla variant with identical asymptotic performance. Section V extends these results to multi-controlled single-qubit gates, including: exact $\mathrm{SU(2)}$ and approximate $\mathrm{U(2)}$ implementations, both maintaining logarithmic depth. Section VI concludes with future research directions.

\section{Related Works}

\subsection{2 and 3-qubit Relative-Phase Toffoli Gates}

In the context of quantum circuit decomposition, controlled gates, such as the CNOT, are fundamental for implementing complex operations. However, these controlled gates can introduce significantly larger errors compared to single-qubit gates \cite{iten2016quantum}, making the minimization of their number essential for high-fidelity circuit execution.

The Toffoli gate, a key component for higher-order controlled operations, can be decomposed efficiently. Lemma 6 of Ref.~\cite{iten2016quantum} provides one such decomposition, implementing the Toffoli gate up to a diagonal gate using single-qubit rotations and three CNOT gates. These results are further elaborated in Ref.~\cite{maslov2016advantages}, where the same decomposition is expressed in terms of $T$ gates, as illustrated in Fig.~\ref{fig:relative_phase}.

\begin{figure}[htbp]
\centerline{\includegraphics[width=0.9\linewidth]{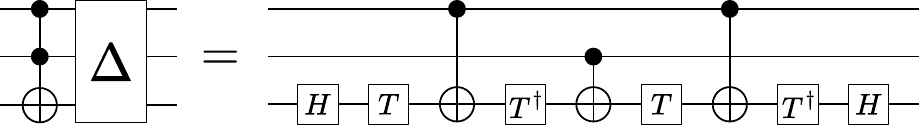}}
\caption{Implementation of a Toffoli gate up to a diagonal requiring only 3 CNOT gates. Adapted from~\cite{maslov2016advantages}.}
\label{fig:relative_phase}
\end{figure}

Furthermore, Ref.~\cite{maslov2016advantages} also proposes a decomposition for a 3-qubit Toffoli gate with a relative phase, which requires six CNOT gates and single-qubit rotations (see Fig.~\ref{fig:relative_phase3}).

\begin{figure}[htbp]
\centerline{\includegraphics[width=\linewidth]{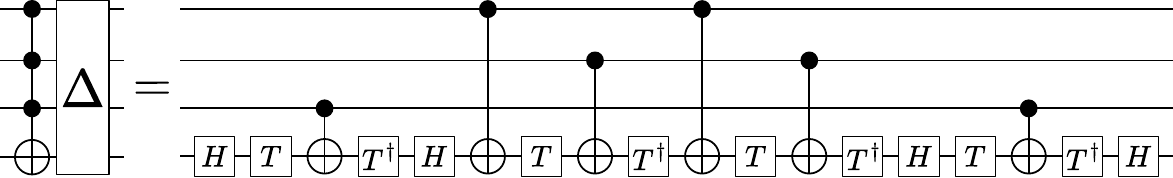}}
\caption{Implementation of a 3-qubit Toffoli gate up to a diagonal gate, requiring only 6 CNOT gates. Adapted from~\cite{maslov2016advantages}.}
\label{fig:relative_phase3}
\end{figure}

Both of these relative-phase constructions are particularly useful for reducing the number of CNOT gates in quantum circuits, a critical step towards mitigating errors.

\subsection{Multi-Controlled, Multi-target SU(2) decomposition}
 Ref.~\cite{silva2024linear} presents a decomposition of $n$-controlled, $m$-target $\mathrm{SU(2)}$ gates that achieve linear depth and CNOT cost. The corresponding circuit design is shown in Fig.~\ref{fig:mcmt_su2}. In this decomposition, the control qubits are divided into two groups: $k_1 = \lceil n/2 \rceil$ and $k_2 = \lfloor n/2 \rfloor$. The decomposition works as follows: when all control qubits are active, the circuit implements $W_i = \sigma_x \cdot A_i \cdot \sigma_x \cdot A_i^\dagger \cdot \sigma_x \cdot A_i \cdot \sigma_x \cdot A_i^\dagger$, where $A_i \in \mathrm{SU(2)}$. However, if any control qubit is inactive, the circuit implements the identity operation on all target qubits. The resulting circuit requires at most $16n + 8m - 32$ CNOT gates and has a depth of $32n + 8m - 52$.

\begin{figure}[htbp]
\centerline{\includegraphics[width=\linewidth]{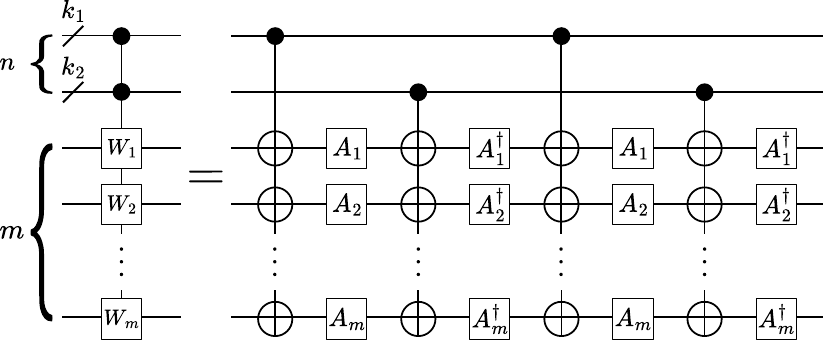}}
\caption{Circuit design scheme for $n$-controlled, $m$-target $\mathrm{SU(2)}$ gates~\cite{silva2024linear}, where each $W_i$ ($i = 1, \ldots, m$) has at least one real-valued diagonal element.}
\label{fig:mcmt_su2} 
\end{figure}

\subsection{Approximate Decomposition}

\begin{figure}[t]
\centerline{\includegraphics[width=\linewidth]{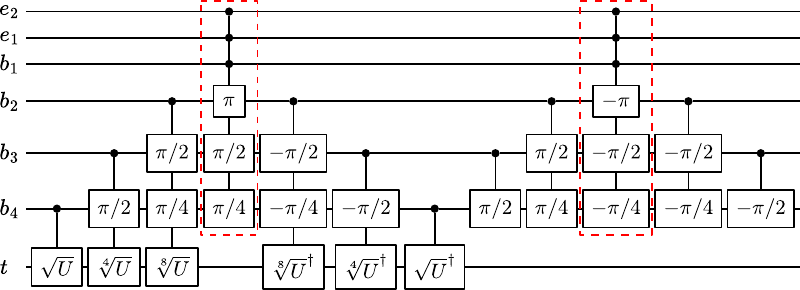}}
\caption{Approximate decomposition of a multi-controlled $\mathrm{U(2)}$ gate with six control qubits. The six control qubits are split into two groups: the extra control qubits $n_e=\{e_1, e_2\}$ and the base control qubits $n_b=\{b_1, b_2, b_3, b_4\}$. The blocks highlighted in red correspond to the decomposition of a multi-controlled, multi-target $\mathrm{SU(2)}$ operation. Adapted from the method described in~\cite{silva2024linear}. The $R_x$ gates are labeled with their corresponding rotation angles $\pm\pi/2^k$, for $k \in \mathbb{N}$.}
\label{fig:mcu_approx}
\end{figure}

The decomposition of approximate $n$-controlled $U$ gates, where $U \in \mathrm{U}(2)$, introduced in~\cite{silva2024linear}, follows a structured approach that employs controlled $R_x$ rotations to implement multi-controlled $U$ gates with linear CNOT count and depth, as illustrated in Fig.~\ref{fig:mcu_approx}.

In the exact decomposition design~\cite{adenilton2022linear}, when all control qubits are active, the circuit applies successive roots of $U$ to the target qubit, ensuring that their product results in $U$, and applies the identity otherwise. To obtain an approximate version, the lowest-index root is removed, introducing a bounded approximation error $\epsilon$.

The number of control qubits $n$ in the approximate circuit design is divided into two groups. The first group, denoted $n_b$, is determined by fixing the approximation error $\epsilon$ according to Eq.~\eqref{scale_nb}:
\begin{equation}\label{scale_nb}
    n_b \geq \log_2\left(\frac{|\theta|}{\arccos\!\left(1 - \epsilon^2 / 2\right)} \right),
\end{equation}
where $\theta$ depends on the original $U$ operator~\cite{silva2024linear}. Specifically, $n_b$ corresponds to the smallest root of the $U$ index, expressed as $U^{1/2^{n_b}-1}$, and therefore represents the number of control qubits to be decomposed into $R_x$ gates.

The second group, denoted $n_e$, is introduced to preserve the scalability of the circuit concerning the number of control qubits. Here, $n_b - 1$ represents the number of $R_x$ gates controlled by $n_e$. In this way, the circuit complexity is concentrated in its central stage, where an $(n_e + 1)$-controlled, $(n_b - 1)$-target $\mathrm{SU}(2)$ operation is employed, as illustrated in Fig.~\ref{fig:mcu_approx} for the case of six controls.

For a given target error $\epsilon$, the number of base control qubits $n_b$ remains fixed, while the decomposition of $(n_e + 1)$-controlled, $(n_b - 1)$-target $R_x$ gates is used to manage the growth in the number of control qubits, as described in~\cite[Theorem~3]{silva2024linear}. This results in a CNOT cost of $4{\left( n_b - 1\right)}^2 + 32n - 16n_b - 48$ and a depth of $64n - 8n_b - 136$, where $n = n_b + n_e$.

\subsection{Conditionally Clean Ancillae}

Ref.~\cite{nie2024quantum} introduced the concept of conditionally clean ancillae, which are qubits whose state is guaranteed to be on a known basis (e.g. $\ket{0}$) only when conditioned on specific control qubits. For example, if a clean qubit $a$ stores the logical AND of controls $\{c_1, c_2, \dots, c_k\}$, then $a = \ket{1}$ implies $\bigwedge_{i=1}^k c_i = 1$. This property allows $c_1, \dots, c_k$ to act as a temporary workspace in subsequent computations without additional ancillas, provided that operations are restricted to the subspace where $a = \ket{1}$. This technique reduces ancillary dependency while preserving reversibility, making it a powerful tool for optimizing multi-controlled gates. Ref.~\cite{nie2024quantum} proposes a decomposition of CNOT gates with $O(log(n))$ depth.  

\subsection{Log-Depth Decomposition of Toffoli Gates with Conditionally Clean Ancillae}
 
Based on~\cite{claudon2024polylogarithmic, nie2024quantum}, Ref.~\cite{khattar2025rise} proposes a decomposition for $C^n X$ (the $n$-qubit Toffoli gate)  
with $2n - 3$ Toffoli gates and $\mathcal{O}(\log n)$ depth using 2 clean ancillae, and $4n -8$ Toffoli gates with $\mathcal{O}(\log n)$ depth using 2 dirty ancillae. The approach recursively partitions controls into blocks, using intermediate results as conditionally clean ancillae to apply gates in parallel,  where intermediate controls are reused as a workspace, reducing both depth and the ancillary count compared to other methods~\cite{khattar2025rise}.  

Ref.~\cite [Section 5.2]{khattar2025rise} operates through five sequential steps. First, the protocol utilizes a Toffoli gate to initialize two conditionally clean qubits using a single clean ancilla. Second, Step 2 constructs a ladder of $n - \log n - 2$ Toffoli gates to accumulate the logical AND of the remaining controls into $\log n$ qubits. This step involves a down ladder of parallel Toffoli gates, spanning from the second to the last control qubit in $k$ layers, and an up ladder in $k-1$ layers, both arranged in a geometric progression. Considering the depth of a 2-qubit Toffoli gate as 1, this step incurs a total depth of $3 \log_2{\left( \frac{n}{2} +1\right)}$ layers.

Subsequently, the method applies their linear-depth Toffoli decomposition~\cite[Section 5.1]{khattar2025rise} to $\log{n}+1$ control qubits: the $\log{n}$ qubits storing the logical AND from the previous step, and the clean ancilla qubit used by step 1.

This linear circuit, for $n$ controls, contains $2n-3$ Toffoli gates and exhibits a depth of $2n-1$. Furthermore, it requires one additional dedicated ancilla qubit. This construction stores the logical AND of all controls in a single control qubit, then applies a 2-qubit Toffoli gate with controls on this single control qubit and the clean ancilla, acting on the target qubit. Steps 4 and 5 components of the logarithmic-depth construction uncompute steps 2 and 1, respectively. The overall circuit depth is:

\begin{equation}\label{khattar_depth_5.2}
    \mathrm{Depth: } \, 2\log_2{\left(n\right)} + 6 \log_2{\left( \frac{n}{2} +1\right)} + 1 = \mathcal{O} \left( \log{n} \right)
\end{equation}

For dirty ancilla implementations, the protocol extends this framework by appending steps 2, 3, and 4 to the end of the circuit, increasing the total Toffoli count from $2n-3$ to $4n-8$ while preserving the depth characteristic of $\mathcal{O}(\log n)$.

\section{Logarithmic-Depth n-Qubit Toffoli Gates with Relative Phase and No Ancillae}\label{relative_phase_sec}

In this section, we present a logarithmic-depth decomposition of $n$-qubit Toffoli gates with a relative phase. In our approach, no dedicated ancilla qubits are required—a critical advantage for near-term quantum hardware—making it particularly suitable for NISQ platforms and complex circuit designs.

The construction of the original $n$-qubit Toffoli gate described in~\cite{khattar2025rise} requires two clean ancilla qubits. 
Here we show that it is possible to synthetise relative phase Toffoli gates without ancilla qubits. The main idea is to use conditionally clean qubits instead of ancilla qubits.
The disadvantage is the addition of 10 CNOT gates in circuit decomposition compared to the $n$-qubit Toffoli gate described in~\cite{khattar2025rise}. 

As illustrated in Fig.~\ref{fig:mrcx} for the case $n = 12$. Our ancilla-free construction uses 3-qubit Toffoli gates with action in the target qubit instead of a dedicated ancilla qubit. The extra control in steps 1 and 5 is used to create a conditionally clean ancilla qubit that will be used in step~3, which is the linear decomposition of $\log{(n-1)}$-qubit Toffoli gates that requires an ancilla qubit. 

The circuit exhibits three distinct modes of operation: when steps 1 and 5 are inactive but step 3 is active, it implements $A^2 \cdot \sigma_x \cdot {A^\dagger}^2 = -\ket{0}\bra{0} + \ket{1}\bra{1}$; when step 3 is inactive, it reduces to identity; and with all controls active, the sequence $A \cdot \sigma_x \cdot A \cdot \sigma_x \cdot A^\dagger \cdot \sigma_x \cdot A^\dagger = \sigma_x$. These behaviors emerge because $A$ and $A^\dagger$ act unconditionally on the target, making the operational modes entirely dependent on the control logic of the Toffoli gates~\cite[Lemma~6]{iten2016quantum}.

\begin{figure}[htbp]
\centerline{\includegraphics[width=\linewidth]{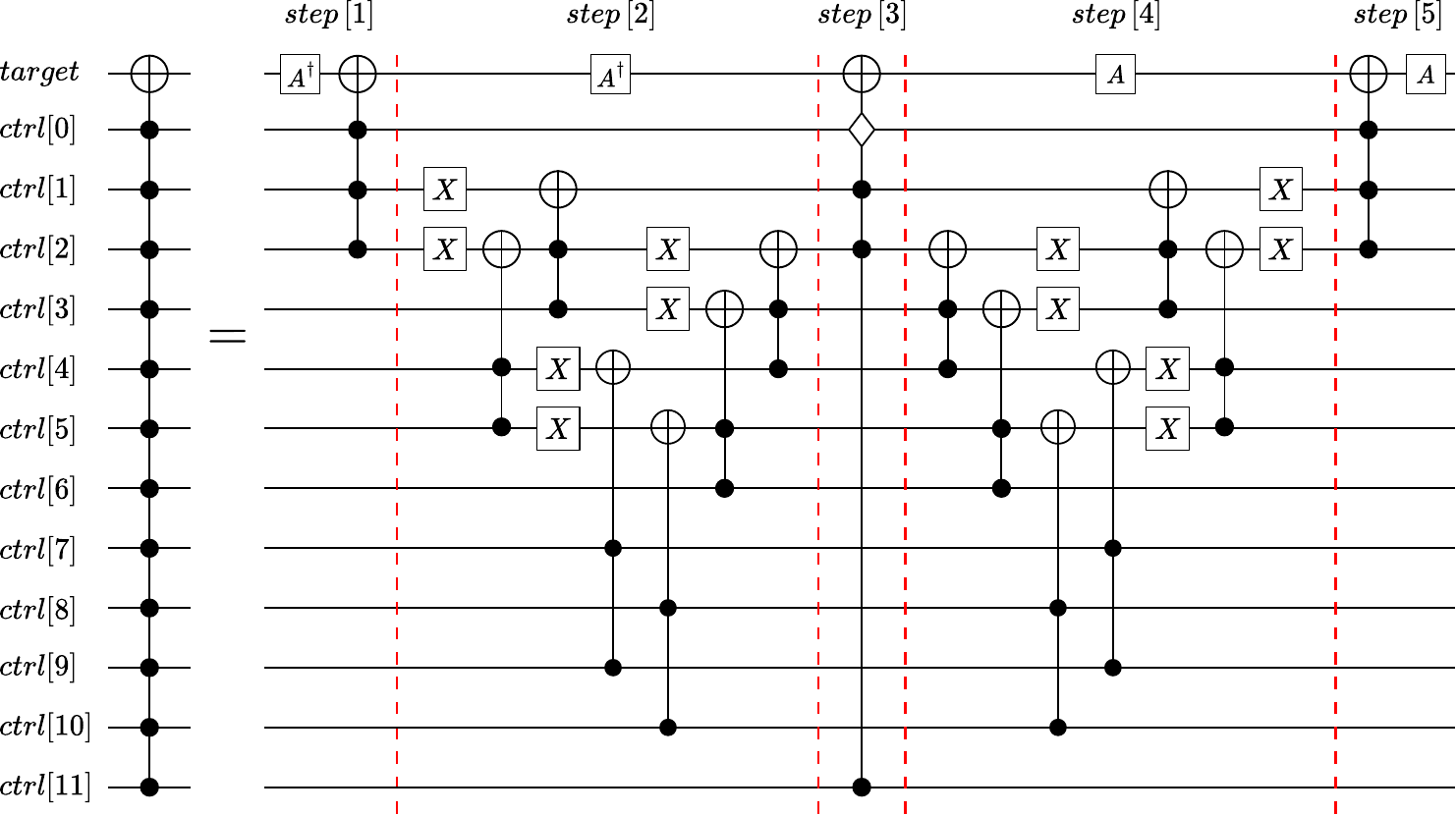}}
\caption{Decomposition of an $n$-qubit Toffoli gate with a relative phase and logarithmic depth \(\mathcal{O}(\log{n})\). Here, $A = R_y(\pi/4)$, and the white diamond on $ctrl[0]$ denotes a conditionally clean ancilla qubit used in Step~3. This ancilla enables the recursive application of the logarithmic-depth decomposition without requiring dedicated ancillas.}
\label{fig:mrcx}
\end{figure}

In this construction, Toffoli gates in steps 2 and 4 maintain identical designs and logical roles to those of~\cite{khattar2025rise}, but applied to the $n-1$ control qubits. For circuits with fewer than 11 control qubits, step 3 requires only a standard 2-qubit Toffoli gate, allowing steps 1 and 5 to use 2-qubit Toffoli gates instead of 3-qubit versions. This configuration eliminates the need for the third conditional ancilla. Theorem~\ref{teo:log_mrcx} formalizes this construction, proving that an $n$-qubit Toffoli gate with relative phase can achieve $\mathcal{O}(\log n)$ depth with at most $6n + 4$ CNOT gates.

\begin{theorem}\label{teo:log_mrcx}
An $n$-qubit Toffoli gate with a relative phase can be implemented using at most $6n + 4$ CNOT gates with $\mathcal{O}(\log n)$ depth.
\end{theorem}
    
\begin{proof}
The CNOT gate count in Steps~2 and~4 follows the construction in Ref.~\cite{khattar2025rise}: Step~2 requires $(n-1) - \log(n-1) - 2$ Toffoli gates, while Step~3 acts on $\log(n-1)$ control qubits (since the circuit has no dedicated ancilla qubits) and requires $2\log(n-1) - 3$ Toffoli gates. In Step~3, the target qubit requires a standard Toffoli gate, which can be implemented with 6 CNOT gates, while all other Toffoli gates in Steps~2 and~3 can be implemented using~\cite[Lemma~6]{iten2016quantum}, each requiring 3 CNOT gates. Furthermore, step~1 implements a 3-qubit Toffoli gate which, according to current Qiskit transpilation, requires 14 CNOT gates. steps~4 and~5 correspond to the inverse operations of Steps~2 and~1, respectively. Summing all contributions results in a total of $6n + 4$ CNOT gates.

For the depth analysis, Steps 1 and 5 each contribute a constant depth of 27, as verified after Qiskit transpilation. Steps~2 and~4 (being inverses) each contribute $3\log_2\!\left(\frac{n-1}{2} + 1\right)$, while Step~3 adds $2\log_2(n-1) - 1$, assuming that the depth of a single Toffoli gate is counted as~1. Since the logarithmic terms dominate the constant contributions, the overall depth is $\mathcal{O}(\log n)$.
\end{proof}

The circuit design utilizes control qubits in dual roles to reduce ancillary resource requirements. When Steps 1 and 5 are activated, the Toffoli operation guarantees all three control qubits are in state $\ket{1}$, allowing one control to serve as a conditionally clean ancilla for Step 3. Conversely, when Step~3 activates independently, the circuit's symmetry ensures the target qubit undergoes the relative-phase operation from~\cite[Lemma 6]{iten2016quantum}, maintaining functionality without dedicated ancillae. This dual-role approach achieves ancillary reduction while preserving the $\mathcal{O}(\log n)$ depth scaling.

\section{Logarithmic-Depth n-Qubit Toffoli Gates with one Ancillae}\label{sec3}

We present a modified circuit decomposition for $n$-qubit Toffoli gates that reduces the ancilla qubits requirements while maintaining logarithmic depth. Building on the logarithmic-depth construction from~\cite{khattar2025rise}, we extend the initial conditionally clean qubit creation from 2 to 3 qubits. 

This modification allows Step 3---the linear implementation on $\log{n} + 1$ qubits---to utilize one of these 3 control qubits of Step~1 as a conditionally clean ancilla qubit. This qubit should be one of the $n - \log{n}$ qubits not involved in storing the logical AND from Step 2, eliminating the need for dedicated ancillae in the recursive step.

To construct an $n$-qubit Toffoli gate with $\mathcal{O}(\log n)$ depth using one clean ancilla qubit, we implement the circuit through five steps with mirror symmetry: Step~4 inverts Step~2, and Step~5 inverts Step~1. Fig.~\ref{fig:mcx12} demonstrates this construction for $n=12$.

\begin{figure}[htbp]
\centerline{\includegraphics[width=\linewidth]{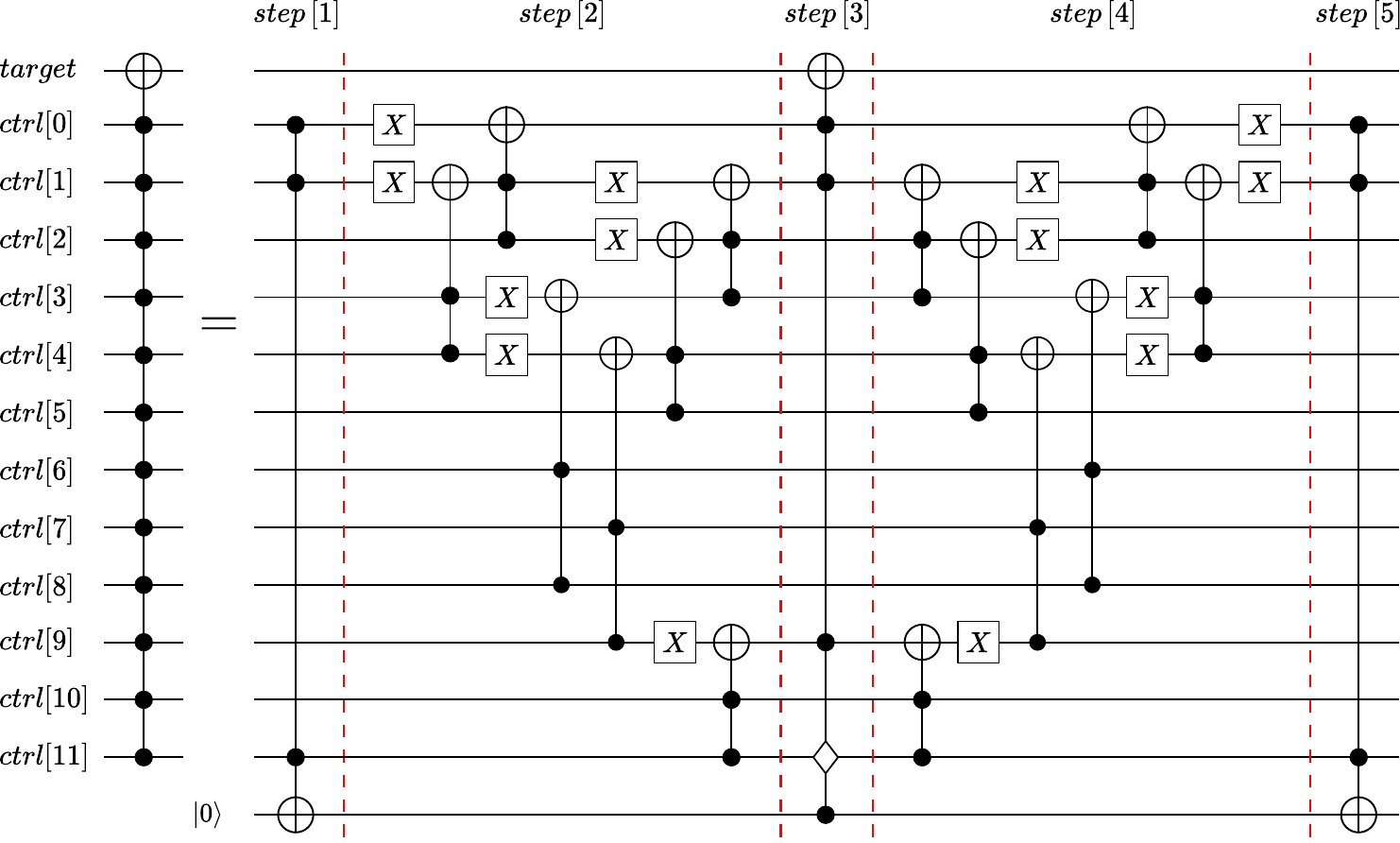}}
\caption{Implementation of an $n$-qubit Toffoli gate using one clean ancilla qubit, illustrated for $n=12$ with $\mathcal{O}(\log n)$ depth. The white diamond on qubit $\mathit{ctrl}[11]$ marks the conditionally clean ancilla used for Step~3, enabling the recursive logarithmic-depth decomposition.}
\label{fig:mcx12}
\end{figure}

Step 1 employs a 3-qubit Toffoli gate where two controls generate conditionally clean ancillas for Step 2, while the third becomes a conditionally clean ancilla for Step 3. The subsequent steps maintain this resource efficiency: Step 3 performs the linear decomposition from~\cite{khattar2025rise} on $\log{(n)} + 1$ qubits, while Steps 4 and 5 respectively uncompute Steps 2 and 1, completing the circuit without additional ancillas.

\begin{theorem}\label{teo:log_mcx}
    It is possible to construct a quantum circuit that implements an $n$-qubit Toffoli gate using one clean ancilla, with a total of $6n+2$~CNOT gates and depth $\mathcal{O}(\log{n})$.
\end{theorem}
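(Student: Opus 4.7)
The plan is to verify the five-phase construction described above by (i) tracing its action step by step to establish correctness, and (ii) solving the CNOT-count and depth recurrences it induces. All primitives---the relative-phase Toffoli of Fig.~\ref{fig:relative_phase}, the conditionally clean ladder block of Fig.~\ref{fig:primitive_circuit}, and the $X$-bracketing trick for promoting a control to a conditionally clean ancilla (Sec.~II.D)---have already been introduced, so the argument is mainly one of assembly and constant tracking.

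For correctness, I would trace the computational-basis action phase by phase. Phase~1 applies a relative-phase Toffoli to the clean ancilla and two controls, leaving the ancilla equal to the AND of those two controls (up to a diagonal); together with one of those controls, this supplies two conditionally clean qubits usable in the subspace where the ancilla is $\ket{1}$. Phase~2 exploits that workspace to run a logarithmic-depth ladder of $n-\log n - 2$ relative-phase Toffolis on the remaining $n-2$ controls, depositing the AND of all of them across $\log n$ designated qubits. Phase~3 flips the target iff those $\log n$ qubits, together with one further control repurposed as a conditionally clean ancilla via the surrounding $X$ gates, are all $\ket{1}$; this $(\log(n+1))$-qubit Toffoli is obtained by recursively invoking the theorem at input size $\log(n+1)$. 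Phases~4 and~5 mirror Phases~2 and~1 exactly, uncomputing the ladder and the initial Toffoli and cancelling every relative-phase diagonal. The net unitary on the $n$ controls and single target is therefore an exact $n$-qubit Toffoli.

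For the resources, let $C(n)$ and $D(n)$ denote CNOT count and depth. Because the two conditionally clean qubits allow all ladder blocks of Phase~2 to execute in $O(\log n)$ parallel layers, and each relative-phase Toffoli and each ladder block has constant CNOT cost $c_0$ and $c_1$ respectively, the construction gives
\begin{equation}
D(n) = O(\log n) + D(\log(n+1)),
\end{equation}
\begin{equation}
C(n) = 2c_0 + 2c_1 (n - \log n - 2) + C(\log(n+1)).
\end{equation}
The depth recurrence unfolds over its iteratively logarithmic tower $n \to \log n \to \log\log n \to \cdots$ to $D(n) = O(\log n)$. A careful per-block accounting, using the fact that every $X$ gate in the ladder and in the $X$-bracketing of Phase~3 fuses into adjacent single-qubit Clifford layers, shows $c_0 = c_1 = 3$, so $C(n) = 6n - 6\log n - 12 + C(\log(n+1))$; this telescopes to $6n + O(1)$, and matching to the small-$n$ base case (where the recursion bottoms out in an ordinary Toffoli) fixes the additive constant to $-6$.

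The main obstacle I anticipate is pinning the leading constant to exactly $6n - 6$ rather than the vaguer $O(n)$. This demands that every $X$ gate be fused into adjacent single-qubit Clifford layers so no spurious CNOT is introduced, that the diagonal phases of Phases~1--2 cancel precisely against those of Phases~4--5, and that the recursion's base case contribute exactly the right absolute constant. Correctness and the $O(\log n)$ depth follow largely mechanically from the tools recalled in Section~II; it is this gate-level bookkeeping, carried consistently through the mirror symmetry and the $\log^\star$-level recursion, that delivers the exact leading constant claimed in the statement.
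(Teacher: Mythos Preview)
Your five-phase construction, correctness trace, and depth analysis match the paper's approach. The paper's proof, however, does not set up or solve a CNOT recurrence: it simply observes that swapping Step~3's linear-depth sub-Toffoli for the logarithmic-depth one preserves the CNOT count (both realise a $(\log n + 1)$-control Toffoli with the same gate complexity), so the $6n-6$ total is inherited directly from the Khattar--Gidney Toffoli count rather than derived ab initio.

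Your recurrence route also works, but two slips should be fixed. First, $2c_0 + 2c_1(n - \log n - 2)$ with $c_0=c_1=3$ equals $6n - 6\log n - 6$, not $6n - 6\log n - 12$. Second, once that constant is corrected and the recursive call is read as size $\log n + 1$ (the $\log n$ AND-accumulator qubits together with the Step~1 ancilla), the recurrence $C(n) = 6n - 6\log n - 6 + C(\log n + 1)$ has $C(k)=6k-6$ as an \emph{exact} fixed point: substituting gives $6n - 6\log n - 6 + 6\log n = 6n - 6$. No telescoping over a $\log^\star$ tower is required, and the base case need not absorb any growing residue. The paper's one-line ``preserves the same CNOT count'' is precisely this fixed-point observation, stated without the recurrence scaffolding.
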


\begin{proof}
The original decomposition in~\cite{khattar2025rise} requires $2n - 3$ Toffoli gates.  
In our modified construction, Steps 1 and 5 are replaced by 3-qubit Toffoli gates with relative phase, which, according to~\cite{maslov2016advantages}, require only $6$~CNOT gates each.  

Step~2 implements $n - \log{n} - 2$ Toffoli gates.
Step~3 applies the linear decomposition of~\cite{khattar2025rise} to $\log{n}$ qubits that store the logical AND computed in the previous step, but replaces the Toffoli gate acting on the target with a 3-qubit Toffoli gate. This gate uses the dedicated ancilla qubit as the third control, and treats the control qubit not used for storing the logical AND—namely, the third control introduced in Steps~1 and~5—as a conditionally clean ancilla for Step~3.
  
As a result, Step~3 accounts for $2\log{n} - 4$ Toffoli gates, plus one 3-qubit Toffoli gate requiring $14$~CNOT gates according to current Qiskit transpilation.

Summing all contributions, and noting that—except for the single 3-qubit Toffoli gate—all other Toffoli gates are implemented using the relative-phase construction (3~CNOTs each), we obtain:
\[
    2\cdot 6 \;+\; 3(2n - 8) \;+\; 14 \;=\; 6n + 2 \quad\text{CNOT gates.}
\]

For the depth analysis, Steps 2 and 4 retain the original structure from~\cite[Sec. 5.2]{khattar2025rise}, each contributing $3(\log(n/2) + 1)$ to the depth (assuming unit depth for standard Toffoli gates). Step~3 implements the linear decomposition on $\log n + 1$ qubits, where the target-acting Toffoli gate becomes a 3-qubit operation with constant depth. Steps 1 and 5 each contribute a depth of 18. Combining these terms, the total depth is $\mathcal{O}(\log n)$, where the asymptotic behavior is dominated by the logarithmic terms.
\end{proof}

In contrast to the relative-phase $n$-qubit Toffoli gate from Section~\ref{relative_phase_sec}, the current implementation modifies the base case of Step 3 to use a 3-qubit Toffoli gate. This modification ensures the target qubit only undergoes $\sigma_x$ when both Steps~1 and 5 are activated. Crucially, without this adjustment, Step~3 could execute when some control qubits are $\ket{0}$, leading to incorrect transformations. The ancilla qubit's additional control preserves the circuit's logical correctness while maintaining the $\mathcal{O}(\log n)$ depth complexity.

\section{Logarithmic-Depth Implementations of Multi-Controlled Single Qubit Gates}

Building upon our logarithmic depth decomposition of $n$-qubit Toffoli gates, we extend this approach to implement multi-controlled single-qubit operations. We present exact decompositions for $W \in \mathrm{SU(2)}$ gates, and approximate implementations for general $U \in \mathrm{U(2)}$ operators, both with $n$ control qubits. Each case maintains the logarithmic depth scaling while adapting the control structure through optimized ancilla management and gate synthesis. These results enable the efficient implementation of quantum algorithms that require controlled rotations or unitary transformations, particularly in scenarios where circuit depth is the limiting resource.

\subsection{Logarithmic-Depth Decomposition of Multi-Controlled, Multi-Target single qubit gates}

Multi-controlled multi-target gates arise in various quantum algorithms and state preparation routines~\cite{zindorf2024efficient, silva2024linear, tomesh2024quantum, gui2024spacetime, rosa2025optimizing}, particularly when the same unitary operation needs to be conditionally applied to multiple qubits. A straightforward adaptation of single-target decompositions to the multi-target case would require applying the same multi-controlled operation on each target sequentially, a cascade of decompositions.

In contrast, our multi-target approach leverages the fact that the control qubits are identical for every target, enabling a parallel implementation. Instead of executing a chain of multi-controlled operations, a single circuit is constructed that conditionally applies the desired unitary simultaneously to all target qubits when all control qubits are active. In this section, we propose a logarithmic-depth decomposition strategy for $n$-controlled $m$-target $W$ gates. Our approach employs an optimized implementation of a multi-controlled multi-target Toffoli gate and extends the linear-depth scheme of Ref.~\cite{silva2024linear} into a log-depth decomposition. As a result, the method achieves both a low CNOT gate count and reduced depth, making it particularly suitable for near-term quantum devices with limited coherence times.

We implement the $m$-target $X$ gate by applying a pair of CNOT gates to each target qubit, one before and one after the $n$-controlled $X$ operation, following the approach described in Ref.~\cite{wille2013improving}. As shown in Fig.~\ref{fig:mtx}, the circuit architecture consists of cascades of CNOT gates flanking an $n$-qubit Toffoli gate. A key optimization is achieved by rearranging the control qubits to group commuting CNOT operations into parallelizable layers, each containing at most $2^p$ gates. This design enables the construction of an $n$-controlled $m$-target $X$ gate ($C^n \bigotimes_{i=1}^m X_i$) with an optimal depth scaling of $\mathcal{O}(\log{n} + \log{m})$.

\begin{figure}[htbp]
\centerline{\includegraphics[width=\linewidth]{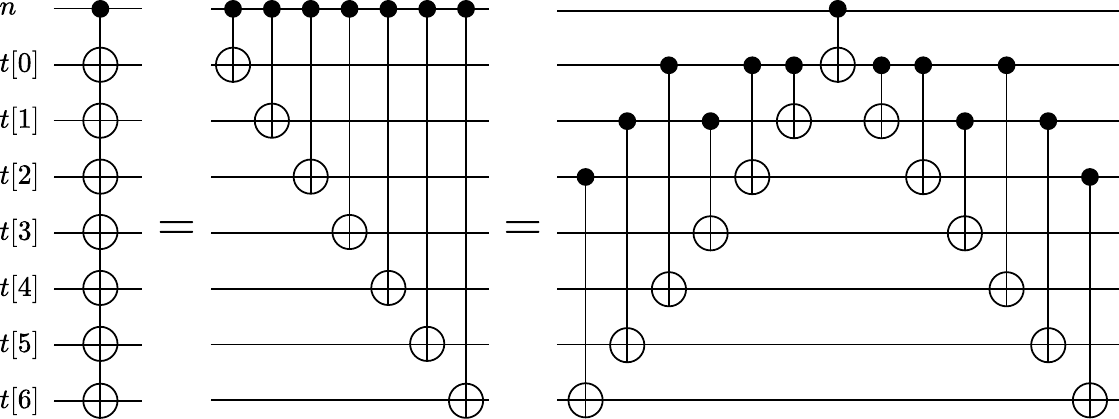}}
\caption{Log-depth multi-target $X$ gates scheme, illustrated for 7 targets. The circuit design features a cascade of CNOT gates surrounding an $n$-qubit Toffoli gate, with control qubits strategically arranged to enable parallelization.}
\label{fig:mtx}
\end{figure}

\begin{lemma}\label{lem:mcmtx}
    An $n$-controlled $m$-target Toffoli gate can be constructed using one clean ancilla qubit, with a total of $6n + 2m$ CNOT gates and a depth of $\mathcal{O}(\log{n} + \log{m})$.
\end{lemma}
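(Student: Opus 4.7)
The plan is to combine the $n$-qubit Toffoli of Theorem~\ref{teo:log_mcx} with the multi-target fanout trick from Ref.~\cite{wille2013improving}, as illustrated in Fig.~\ref{fig:mtx}. I would designate one of the $m$ targets as the primary target, apply the $\mathcal{O}(\log n)$-depth $n$-qubit Toffoli of Theorem~\ref{teo:log_mcx} to it (consuming the single clean ancilla), and flank this central Toffoli on each side by a block of $m-1$ CNOTs whose role is to propagate the conditional flip $p = \bigwedge_i c_i$ from the primary target to the remaining $m-1$ target wires.

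For the gate count, the central Toffoli contributes $6n-6$ CNOTs by Theorem~\ref{teo:log_mcx}, while each of the $m-1$ non-primary targets picks up one CNOT before and one after, adding $2(m-1) = 2m-2$ CNOTs for a total of $6n + 2m - 8$, matching the statement. Correctness can be verified by direct XOR bookkeeping: the first fanout produces linear combinations of the target values, the central $X$ on the primary wire adds $p$ to it, and the second fanout, chosen as the inverse of the first, undoes the linear combinations while distributing $p$ to every target. In the Heisenberg picture this is just the relation $\mathrm{CNOT}_{c,t}\, X_c\, \mathrm{CNOT}_{c,t} = X_c X_t$ propagated along the tree.

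The main obstacle is achieving $\mathcal{O}(\log m)$ depth for each fanout block, since the naive cascade of $m-1$ CNOTs sharing the primary wire has linear depth because every gate touches that source wire. The fix is a balanced binary-tree arrangement: at layer $p$, up to $2^{p-1}$ mutually disjoint CNOTs execute in parallel, each using as its source a wire that already received a copy in an earlier layer. After $\lceil \log_2 m \rceil$ such layers, all $m$ target wires carry the required linear combination, and mirroring the tree on the opposite side of the central Toffoli unwinds the structure while simultaneously distributing $p$ to every target. Because every fanout CNOT acts only on the target register, the ancilla budget stays at a single clean qubit inherited from Theorem~\ref{teo:log_mcx}, and the total depth $\mathcal{O}(\log n) + \mathcal{O}(\log m) = \mathcal{O}(\log n + \log m)$ follows by adding the three contributions.
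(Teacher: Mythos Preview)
Your proposal is correct and follows essentially the same approach as the paper: apply Theorem~\ref{teo:log_mcx} on a designated primary target for the $6n-6$ CNOTs and $\mathcal{O}(\log n)$ depth, then surround it with two mirrored binary-tree CNOT fanouts of $m-1$ gates each, arranged in $\lceil \log_2 m\rceil$ parallel layers. If anything, your write-up is more thorough than the paper's, since you supply the correctness argument via the conjugation identity $\mathrm{CNOT}_{c,t}\,X_c\,\mathrm{CNOT}_{c,t}=X_cX_t$ propagated along the tree, which the paper leaves implicit.
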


\begin{proof}
    To implement the $n$-controlled part, we apply Theorem~\ref{teo:log_mcx}, which requires $6n +2$ CNOT gates and has a depth of $\mathcal{O}(\log{n})$. For the $m$-target part, we add 2 CNOT gates for each of $m-1$ additional targets. Thus, the total number of CNOT gates becomes:
    \[
        6n + 2m \,\mathrm{CNOTs}.
    \]
    To compute the circuit depth for the $m$-target part, we arrange commutative groups of CNOT gates in powers of two on both sides of the $n$-controlled Toffoli gate. Each side forms $L = \log(m)$ layers, resulting in a total of $2L$ layers. Therefore, the total depth is:
    \[
        \mathrm{Depth:} \, \mathcal{O}(\log{n} + \log{m}).
    \]
\end{proof}

Regarding the depth in the multi-target part, this decomposition remains particularly advantageous, even with a single control qubit. Although a single control qubit only requires a cascade of CNOT gates, effectively halving the CNOT count in the target part, the resulting depth is linear. In contrast, for $m$-target CNOT gates, Lemma~\ref{lem:mcmtx} achieves a depth of $\mathcal{O}(\log{m})$.

\begin{figure}[htbp]
\centerline{\includegraphics[width=\linewidth]{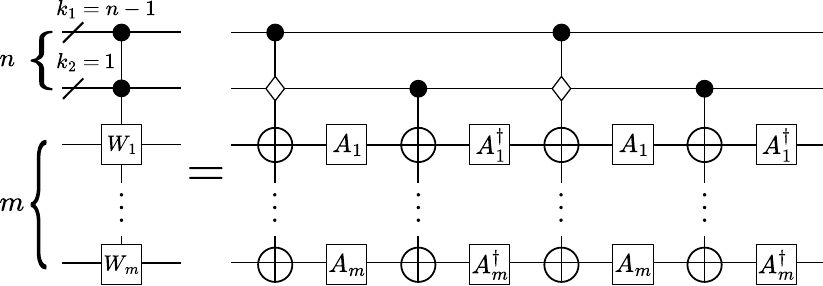}}
\caption{Logarithmic-depth decomposition of an $n$-controlled $m$-target $\mathrm{SU(2)}$ gate. The white diamond in $k_2$ is used as a conditionally clean ancilla qubit for the $k_1$-qubit $m$-target Toffoli gate. The design allows parallelization of gate blocks to minimize depth, using $k_2$ as a conditionally clean ancilla qubit.}
\label{fig:log_mcmt_su2}
\end{figure}

For $n$-controlled $m$-target $W$ decomposition, we retain the same circuit design as in Fig.~\ref{fig:mcmt_su2}. However, a few modifications are sufficient to achieve logarithmic depth. We can do $k_1 = n-1$ and $k_2 = 1$ and replace the $\{k_1, k_2\}$-qubit Toffoli gate in the circuit of Fig.~\ref{fig:mcmt_su2} by substituting Lemma 8 of Ref.~\cite{iten2016quantum} with our optimized version, described in Lemma~\ref{lem:mcmtx}, which reduces both the CNOT cost and depth. In this configuration, the qubit in $k_2$ is used as a conditionally clean ancilla by the $k_1$-qubit Toffoli gates, as shown in Fig.~\ref{fig:log_mcmt_su2}. This modification leads to the following theorem:

\begin{theorem}\label{teo:mcmtsu2}
    It is possible to construct a quantum circuit that implements an $n$-controlled, $m$-target $\mathrm{SU(2)}$ gate, $C^n \bigotimes_{i=1}^m W_i$, where $W_i \in \mathrm{SU(2)}$, using at most $12n + 8m - 14$ CNOT gates, with depth $\mathcal{O}(\log n + \log m)$.
\end{theorem}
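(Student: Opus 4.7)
The plan is to keep the Silva et al.~\cite{silva2024linear} template of Fig.~\ref{fig:mcmt_su2} but instantiate it with the asymmetric partition $k_1 = n-1$ and $k_2 = 1$, and then replace the inner $\{k_1, k_2\}$-qubit multi-target Toffoli blocks by our log-depth construction from Lemma~\ref{lem:mcmtx}, using the lone qubit in the $k_2$ register as the conditionally clean ancilla required by that lemma. The resulting circuit is the one drawn in Fig.~\ref{fig:log_mcmt_su2}.

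First, I would invoke correctness of the underlying template directly from Theorem~3 of~\cite{silva2024linear}: when all $n$ controls are active, the composite operation reduces to $W_i = \sigma_x A_i \sigma_x A_i^\dagger \sigma_x A_i \sigma_x A_i^\dagger$ on each of the $m$ targets, and otherwise the interleaved $A_i$ and $A_i^\dagger$ blocks cancel in pairs. Swapping a Toffoli realization for a gate-level equivalent one preserves this semantics.

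Next I would verify that the lone $k_2$ qubit legitimately serves as a conditionally clean ancilla for the $(n-1)$-controlled $m$-target Toffoli blocks used inside. The recursive log-depth construction of Theorem~\ref{teo:log_mcx} only requires its ancilla to be in a known state on the subspace where the block actually acts nontrivially on its targets. Since these blocks are invoked only within the outer structure that fires nontrivially only when $k_2 = \ket{1}$, a pair of $X$ conjugations around the block places $k_2$ in the $\ket{0}$ state demanded internally, and the $A_i$--$A_i^\dagger$ cancellation above already guarantees that the complementary subspace is left untouched. This is the main technical obstacle: the conditionally-clean-ancilla bookkeeping must be checked against the exact subspace assumptions of Lemma~\ref{lem:mcmtx} and Theorem~\ref{teo:log_mcx}.

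Finally I would collect the resource counts. Each of the two surviving multi-target Toffoli blocks costs $6(n-1) + 2m - 8$ CNOTs by Lemma~\ref{lem:mcmtx}, contributing $12n + 4m - 28$ in total; the interleaved controlled single-qubit rotations driven from the $k_2$ qubit contribute the remaining $4m - 2$ CNOTs (two CNOT fan-outs of width $2m-1$ sharing $k_2$ as their control), summing to $12n + 8m - 30$. For the depth, each Toffoli block is $\mathcal{O}(\log n + \log m)$ by the same lemma, the fan-out layers are $\mathcal{O}(\log m)$ each, and the overall composition consists of a constant number of such blocks, yielding an overall depth of $\mathcal{O}(\log n + \log m)$.
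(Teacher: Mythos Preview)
Your proposal is correct and follows the same approach as the paper's proof: the asymmetric split $k_1 = n-1$, $k_2 = 1$, the reuse of the lone $k_2$ qubit as the conditionally clean ancilla for the two $(n-1)$-controlled $m$-target Toffoli blocks via Lemma~\ref{lem:mcmtx}, and the resulting CNOT and depth bounds are all identical. Your accounting is organized slightly differently---you cost out the two large Toffoli blocks and the two $k_2$-driven fan-outs directly, whereas the paper first handles the $m=1$ case and then adds $8(m-1)$ CNOTs for the extra targets---but the arithmetic coincides.
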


\begin{proof}
    Consider an $n$-controlled, $m$-target $\mathrm{SU(2)}$ gate, $C^n \bigotimes_{i=1}^m W_i$, where $W_i \in \mathrm{SU(2)}$. We partition the $n$ control qubits into two groups: $k_1 = n - 1$ and $k_2 = 1$.  
    As shown in Fig.~\ref{fig:log_mcmt_su2}, the construction consists of eight steps: two $C^{k_1} \bigotimes_{i=1}^m X_i$ gates, two $C^{k_2} \bigotimes_{i=1}^m X_i$ gates, and four layers of $A$ and $A^\dagger$ gates.  

    For each $k_1$-controlled, $m$-target Toffoli gate, Lemma~\ref{lem:mcmtx} gives a cost of $6k_1 + 2m$ CNOTs, resulting in $2[6k_1 + 2m]$ for both gates.  
    For each $k_2$-controlled, $m$-target Toffoli gate, the first target is simply a CNOT (since $k_2 = 1$), giving a cost of $2(m-1) + 1$ CNOTs, or $2[2(m - 1) + 1]$ for both.  
    Summing all contributions yields:
    \[
        12n + 8m -  14\, \textrm{CNOTs}
    \]

    For the depth, the $A_i$ and $A_i^\dagger$ gates ($i = 1, \dots, m$) act in parallel, contributing with depth of 4. Using Lemma~\ref{lem:mcmtx}, the two $k_1$-controlled, $m$-target Toffoli gates, contributes with depth of $\mathcal{O}(\log{n}+ \log{m})$ each, and the two $k_2$-controlled, $m$-target Toffoli gate contributes with depth of $\mathcal{O}(\log{m})$, each. This way, the overall depth scales as $\mathcal{O}(\log{n} + \log{m})$, where the logarithmic terms dominate the asymptotic behavior.
\end{proof}

\begin{figure}[htbp]
\centerline{\includegraphics[width=\linewidth]{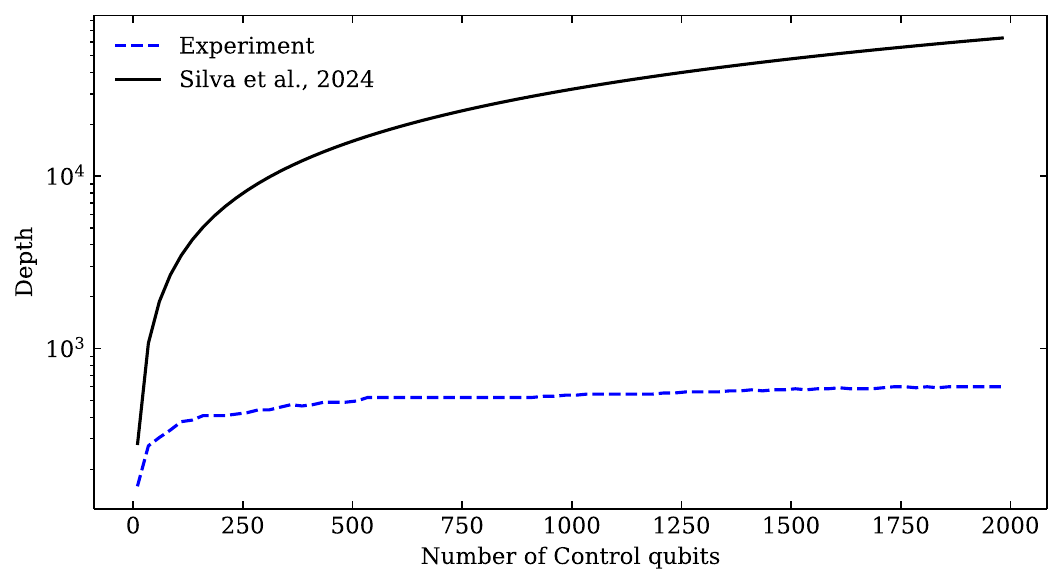}}
\caption{Comparative circuit depth (log scale) for decomposing multi-controlled $W \in \mathrm{SU(2)}$ gates. The black solid line shows the depth obtained after transpilation using Qiskit for the decomposition method predicted by~\cite[Theorem 2]{silva2024linear}. In contrast, the blue dashed line represents the depth resulting from our decomposition, also after Qiskit transpilation, and predicted by Theorem~\ref{teo:mcmtsu2}.}
\label{fig:mcmtsu2}
\end{figure}

Fig.~\ref{fig:mcmtsu2} shows a comparative analysis of circuit depth (in logarithmic scale) for the decomposition of multi-controlled $W \in\mathrm{ SU(2)}$ gates. The black line represents the depth of circuits generated using the decomposition method proposed by~\cite[Theorem 2]{silva2024linear}. In contrast, the blue dashed line represents the depth obtained from our proposed decomposition, both before and after transpilation using Qiskit. 

As shown in Fig.~\ref{fig:mcmtsu2}, our method achieves significantly lower circuit depth as the number of control qubits increases, which indicates that it is more efficient in practice. This improvement makes our decomposition a promising approach for quantum circuit optimization, especially in the context of NISQ devices, where circuit depth directly impacts fidelity due to noise, and in fault-tolerant architectures, where minimizing resources is crucial.

\subsection{Logarithmic-Depth Decomposition of Approximate Multi-Controlled U(2) gates}

 Theorem 3 of Ref.~\cite{silva2024linear} uses a linear depth multi-controlled, multi-target $\mathrm{SU(2)}$ decomposition to optimize their approximate multi-controlled $\mathrm{U(2)}$ decomposition. We propose replacing this multi-controlled, multi-target $\mathrm{SU(2)}$ decomposition with our newly optimized version, which features logarithmic depth and lower CNOT cost. This modification leads to the following theorem:

\begin{theorem}\label{teo: last}
    An approximate decomposition of a $n$-controlled $U$ gate ($C^n U$), where $U \in \mathrm{U}(2)$, can be implemented up to an error $\epsilon$ with a CNOT count upper bound of $4(n_b - 1)^2 + 24n - 8n_b -20$ and depth $\mathcal{O}(\log{n})$, where $n_b$ is a fixed base number of control qubits that depends on $\epsilon$ as in Eq.~\eqref{scale_nb}, and $n \geq n_b + 5$.
\end{theorem}

\begin{proof}
    The proof follows a similar structure to Theorem 3 of Ref.~\cite{silva2024linear}, but with a key difference: instead of using Theorem 2 of Ref.~\cite{silva2024linear} to account for $(n_e + 1)$-controlled $(n_b - 1)$-target $\mathrm{SU}(2)$ gates, we use our improved version described in Theorem~\ref{teo:mcmtsu2}. 

    By applying Theorem~\ref{teo:mcmtsu2}, we account for $2[12(n_e + 1) + 8(n_b - 1) -14]$ CNOTs. Adding these to the $4(n_b-1)^2$ CNOTs from the multi-controlled $U^{1/2^j}$ gates and expressing the result in terms of the total number of controls $n$ and $n_b$ (where $n = n_b + n_e$), the total CNOT count becomes:
    \[
        4(n_b - 1)^2 + 24n - 8n_b -20 \quad \text{CNOTs}.
    \]

    For the depth, note that the growth of the circuit depends on the extra control qubits $n_e$, while $n_b$ remains fixed up to an error $\epsilon$. Thus, asymptotically, the depth matches that of Theorem~\ref{teo:mcmtsu2}, i.e.: 
    $\mathcal{O}(\log{n})$ depth.
\end{proof}

\begin{figure}[htbp]
    \centerline{\includegraphics[width=\linewidth]{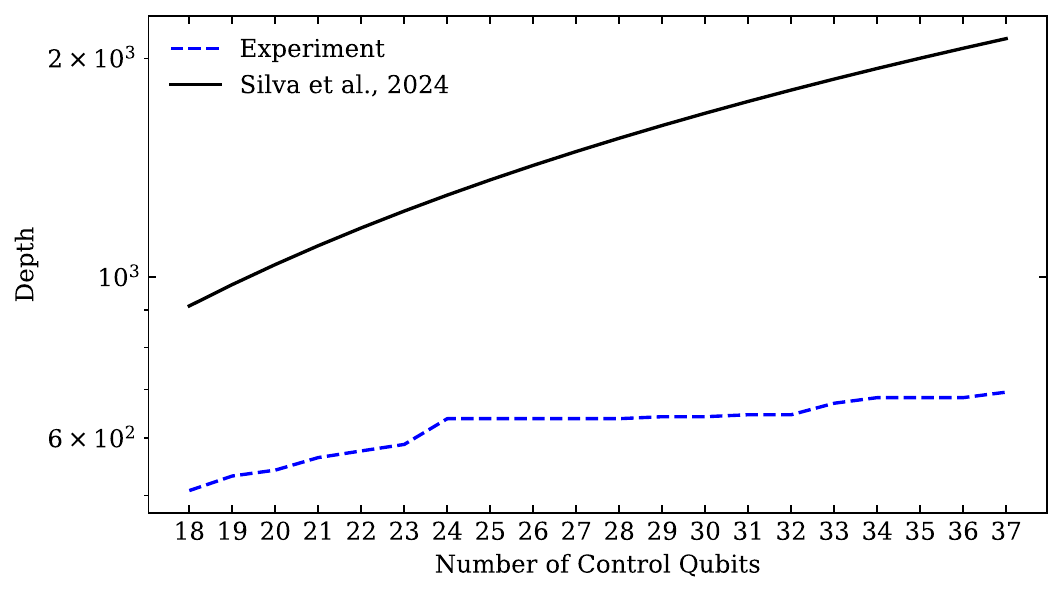}}
    \caption{
        Comparative circuit depth (logarithmic scale) for approximate decomposition of multi-controlled $U = X$ gates. 
        The blue dashed line represents our experimental results post-Qiskit transpilation (predicted by Theorem~\ref{teo: last}), 
        while the black solid line denotes the theoretical circuit depth from \cite[Theorem 3]{silva2024linear}.
    }
    \label{fig:mcu2}
\end{figure}

The approximation error bound in Eq.~\eqref{scale_nb} remains unchanged, as it depends solely on the base parameter $n_b$, preserved in our construction. By retaining the exact unitary transformation on each target qubit (per the original decomposition in~\cite{silva2024linear}), our substitution of the $(n_e+1)$-controlled $(n_b-1)$-target $R_x$ gates with Theorem~\ref{teo:mcmtsu2}'s optimized implementation modifies only the circuit resources—not the approximation's mathematical structure. Thus, while significantly improving efficiency, our approach preserves both the error scaling $\epsilon$ and all convergence guarantees.

Fig.~\ref{fig:mcu2} compares the circuit depth (logarithmic $Y$-axis) for approximate decomposition of multi-controlled $U = X$ gates. The blue dashed line shows our experimental results after Qiskit transpilation, aligned with Theorem~\ref{teo: last}, while the black solid line reflects the theoretical bounds from \cite[Theorem 3]{silva2024linear}. Our decomposition achieves logarithmic depth, critical for near-term quantum devices with limited coherence times. Both methods scale linearly in CNOT cost with the number of controls $n$, but our approach reduces the cost to $\mathcal{O}(24n)$, a $25\%$ improvement over Silva et al.'s $\mathcal{O}(32n)$.

This reduction in depth and CNOT count is crucial for NISQ devices, where gate errors and decoherence limit the depth of executable circuits. Logarithmic-depth decompositions can reduce error accumulation, potentially enhancing output fidelity. The lower CNOT overhead further reduces runtime and improves fidelity, as two-qubit gates dominate error rates in most architectures. 

\section{Conclusion}
\label{sec:conclusion}

This work presents comprehensive advances in the decomposition of multi-controlled quantum gates, achieving simultaneous improvements in depth complexity, ancilla requirements, and gate counts. We develop two fundamental implementations: an ancilla-free $n$-qubit Toffoli gate with relative phase and $\mathcal{O}(\log n)$ depth and a single clean-ancilla version with minimal overhead. These constructions build upon Ref.~\cite{khattar2025rise} through the application of conditionally clean qubits, with both the relative-phase and clean-ancilla versions introducing only minimal overhead compared to the original decomposition while reducing ancillary requirements.

The clean-ancilla variant cuts the original ancilla count in half while adding only 8 CNOT gates. Our techniques further enable $n$-controlled $m$-target operations with $\mathcal{O}(\log n + \log m)$ depth, substantially advancing multi-controlled single-qubit gate synthesis. Beyond Toffoli gates, we demonstrate exact $\mathrm{SU}(2)$ and approximate $\mathrm{U}(2)$ controlled operations with $\mathcal{O}(12n)$ and $\mathcal{O}(24n)$ CNOT counts, respectively, both achieving logarithmic depth scaling—a significant improvement over prior linear-depth approaches.

The conditionally clean qubit methodology proves particularly powerful, enabling relative-phase implementations that eliminate auxiliary qubits without compromising depth complexity. These advances offer concrete benefits for both near-term devices and fault-tolerant architectures, establishing new benchmarks for resource-efficient quantum circuit design. Future work may explore additional optimizations through gate cancellation techniques and shared ancilla protocols, potentially yielding further improvements for practical quantum algorithms relying on controlled operations.

\section*{Conflict of interest}
The authors declare no conflicts of interest.

\bibliographystyle{IEEEtran}
\bibliography{refs}

\begin{thebibliography}{10}
\providecommand{\url}[1]{#1}
\csname url@samestyle\endcsname
\providecommand{\newblock}{\relax}
\providecommand{\bibinfo}[2]{#2}
\providecommand{\BIBentrySTDinterwordspacing}{\spaceskip=0pt\relax}
\providecommand{\BIBentryALTinterwordstretchfactor}{4}
\providecommand{\BIBentryALTinterwordspacing}{\spaceskip=\fontdimen2\font plus
\BIBentryALTinterwordstretchfactor\fontdimen3\font minus \fontdimen4\font\relax}
\providecommand{\BIBforeignlanguage}[2]{{%
\expandafter\ifx\csname l@#1\endcsname\relax
\typeout{** WARNING: IEEEtran.bst: No hyphenation pattern has been}%
\typeout{** loaded for the language `#1'. Using the pattern for}%
\typeout{** the default language instead.}%
\else
\language=\csname l@#1\endcsname
\fi
#2}}
\providecommand{\BIBdecl}{\relax}
\BIBdecl

\bibitem{shor1999polynomial}
P.~W. Shor, ``Polynomial-time algorithms for prime factorization and discrete logarithms on a quantum computer,'' \emph{SIAM review}, vol.~41, no.~2, pp. 303--332, 1999.

\bibitem{zhou2020quantum}
L.~Zhou, S.-T. Wang, S.~Choi, H.~Pichler, and M.~D. Lukin, ``Quantum approximate optimization algorithm: Performance, mechanism, and implementation on near-term devices,'' \emph{Physical Review X}, vol.~10, no.~2, p. 021067, 2020.

\bibitem{cao2019quantum}
Y.~Cao, J.~Romero, J.~P. Olson, M.~Degroote, P.~D. Johnson, M.~Kieferov{\'a}, I.~D. Kivlichan, T.~Menke, B.~Peropadre, N.~P. Sawaya \emph{et~al.}, ``Quantum chemistry in the age of quantum computing,'' \emph{Chemical reviews}, vol. 119, no.~19, pp. 10\,856--10\,915, 2019.

\bibitem{bauer2020quantum}
B.~Bauer, S.~Bravyi, M.~Motta, and G.~K.-L. Chan, ``Quantum algorithms for quantum chemistry and quantum materials science,'' \emph{Chemical reviews}, vol. 120, no.~22, pp. 12\,685--12\,717, 2020.

\bibitem{preskill2018quantum}
J.~Preskill, ``Quantum computing in the {NISQ} era and beyond,'' \emph{Quantum}, vol.~2, p.~79, 2018.

\bibitem{barenco_1995}
A.~Barenco, C.~H. Bennett, R.~Cleve, D.~P. DiVincenzo, N.~Margolus, P.~Shor, T.~Sleator, J.~A. Smolin, and H.~Weinfurter, ``Elementary gates for quantum computation,'' \emph{Physical Review A}, vol.~52, pp. 3457--3467, 1995.

\bibitem{shende2006synthesis}
V.~Shende, S.~Bullock, and I.~Markov, ``Synthesis of quantum-logic circuits,'' \emph{IEEE Transactions on Computer-Aided Design of Integrated Circuits and Systems}, vol.~25, no.~6, pp. 1000--1010, 2006.

\bibitem{maslov2016advantages}
D.~Maslov, ``Advantages of using relative-phase toffoli gates with an application to multiple control toffoli optimization,'' \emph{Physical Review A}, vol.~93, no.~2, p. 022311, 2016.

\bibitem{iten2016quantum}
R.~Iten, R.~Colbeck, I.~Kukuljan, J.~Home, and M.~Christandl, ``Quantum circuits for isometries,'' \emph{Physical Review A}, vol.~93, no.~3, p. 032318, 2016.

\bibitem{he2017decompositions}
Y.~He, M.-X. Luo, E.~Zhang, H.-K. Wang, and X.-F. Wang, ``Decompositions of n-qubit {T}offoli gates with linear circuit complexity,'' \emph{International Journal of Theoretical Physics}, vol.~56, pp. 2350--2361, 2017.

\bibitem{adenilton2022linear}
A.~J. da~Silva and D.~K. Park, ``Linear-depth quantum circuits for multiqubit controlled gates,'' \emph{Physical Review A}, vol. 106, p. 042602, 2022.

\bibitem{vale2023circuit}
R.~Vale, T.~M.~D. Azevedo, I.~C. Ara{\'u}jo, I.~F. Araujo, and A.~J. da~Silva, ``Circuit decomposition of multi-controlled special unitary single-qubit gates,'' \emph{IEEE Transactions on Computer-Aided Design of Integrated Circuits and Systems}, 2023.

\bibitem{claudon2024polylogarithmic}
B.~Claudon, J.~Zylberman, C.~Feniou, F.~Debbasch, A.~Peruzzo, and J.-P. Piquemal, ``Polylogarithmic-depth controlled-not gates without ancilla qubits,'' \emph{Nature Communications}, vol.~15, no.~1, p. 5886, 2024.

\bibitem{silva2024linear}
J.~D. Silva, T.~M.~D. Azevedo, I.~F. Araujo, and A.~J. da~Silva, ``Linear decomposition of approximate multi-controlled single qubit gates,'' \emph{IEEE Transactions on Computer-Aided Design of Integrated Circuits and Systems}, 2024.

\bibitem{nie2024quantum}
J.~Nie, W.~Zi, and X.~Sun, ``Quantum circuit for multi-qubit toffoli gate with optimal resource,'' \emph{arXiv preprint arXiv:2402.05053}, 2024.

\bibitem{khattar2025rise}
T.~Khattar and C.~Gidney, ``Rise of conditionally clean ancillae for efficient quantum circuit constructions,'' \emph{Quantum}, vol.~9, p. 1752, 2025.

\bibitem{inada2021measurement}
T.~Inada, W.~Jang, Y.~Iiyama, K.~Terashi, R.~Sawada, J.~Tanaka, and S.~Asai, ``Measurement-free ultrafast quantum error correction by using multi-controlled gates in higher-dimensional state space,'' \emph{arXiv preprint arXiv:2109.00086}, 2021.

\bibitem{remaud2025ancilla}
M.~Remaud and V.~Vandaele, ``Ancilla-free quantum adder with sublinear depth,'' \emph{arXiv preprint arXiv:2501.16802}, 2025.

\bibitem{morgado2021quantum}
M.~Morgado and S.~Whitlock, ``Quantum simulation and computing with rydberg-interacting qubits,'' \emph{AVS Quantum Science}, vol.~3, no.~2, 2021.

\bibitem{zindorf2024efficient}
B.~Zindorf and S.~Bose, ``Efficient implementation of multi-controlled quantum gates,'' \emph{arXiv preprint arXiv:2404.02279}, 2024.

\bibitem{tomesh2024quantum}
T.~Tomesh, N.~Allen, D.~Dilley, and Z.~Saleem, ``Quantum-classical tradeoffs and multi-controlled quantum gate decompositions in variational algorithms,'' \emph{Quantum}, vol.~8, p. 1493, 2024.

\bibitem{gui2024spacetime}
K.~Gui, A.~M. Dalzell, A.~Achille, M.~Suchara, and F.~T. Chong, ``Spacetime-efficient low-depth quantum state preparation with applications,'' \emph{Quantum}, vol.~8, p. 1257, 2024.

\bibitem{rosa2025optimizing}
E.~C. Rosa, E.~I. Duzzioni, and R.~de~Santiago, ``Optimizing gate decomposition for high-level quantum programming,'' \emph{Quantum}, vol.~9, p. 1659, 2025.

\bibitem{wille2013improving}
R.~Wille, M.~Soeken, C.~Otterstedt, and R.~Drechsler, ``Improving the mapping of reversible circuits to quantum circuits using multiple target lines,'' in \emph{2013 18th Asia and South Pacific Design Automation Conference (ASP-DAC)}.\hskip 1em plus 0.5em minus 0.4em\relax IEEE, 2013, pp. 145--150.

\end{thebibliography}
\end{document}